\newcommand{\mpara}[1]{\smallskip\noindent{\bf #1}}
\newcommand{\triple}[3]{(#1, #2, #3)}
\begin{document}
\title{iSummary: Workload-based, Personalized summaries for  Knowledge Graphs}
%
%
\author{Giannis Vassiliou\inst{1} \and
Fanouris Alevizakis\inst{2,3} \and
Nikolaos Papadakis\inst{1} \and
Haridimos Kondylakis\inst{3}
}
%
%
\institute{
Department of Electrical and Computer Engineering, HMU \and 
Computer Science Department, UOC \and
Institute of Computer Science, FORTH}
%
\maketitle              
\begin{abstract}
The explosion in the size and the complexity of the available Knowledge Graphs on the web has led to the need for efficient and effective methods for their understanding and exploration. Semantic summaries have recently emerged as methods to quickly explore and understand the contents of various sources. However, in most cases, they are static, not incorporating user needs and preferences, and cannot scale. In this paper, we present iSummary, a novel, scalable approach for constructing personalized summaries. As the size and the complexity of the Knowledge Graphs for constructing personalized summaries prohibit efficient summary construction, in our approach we exploit query logs. The main idea behind our approach is to exploit knowledge captured in existing user queries for identifying the most interesting resources and linking them, constructing as such high-quality, personalized summaries. We present an algorithm with theoretical guarantees on the summary’s quality, linear in the number of queries available in the query log. We evaluate our approach using three real-world datasets and several baselines, showing that our approach dominates other methods in terms of both quality and efficiency.

\keywords{Semantic Summaries  \and RDF/S \and Workload-based.}
\end{abstract}

\section{Introduction}
Daily, a tremendous amount of new information becomes available online. RDF Knowledge graphs (KGs) rapidly grow to include millions or even billions of triples that are offered through the web. For example, the Linked Open Data
Cloud, currently includes more than 62 billion triples, organized in large and complex RDF data graphs~\cite{LODCloud}. 


The complexity and the size of those data sources limit their exploitation potential and necessitate effective and efficient ways to explore and understand their content~\cite{DBLP:conf/sigmod/TroullinouKLM21}. In this direction, semantic summarization has been proposed as a way to extract useful, minimized information out of large semantic graphs that many applications can exploit instead of the original data graphs for performing certain tasks more efficiently such as visualization \cite{DBLP:conf/esws/PappasTRKP17}, exploration \cite{DBLP:conf/semweb/TroullinouKSP18}, \cite{DBLP:conf/semweb/TroullinouKSP18a}, query answering, etc. \cite{DBLP:journals/vldb/CebiricGKKMTZ19}. Structural semantic summaries focus mostly on the structure of the graph for extracting the required information, whereas non-quotient structural semantic summaries try to select the most important parts of the graph for generating the result summaries.

\textbf{The problem.} Most of the existing works in the area of structural, non-quotient semantic summarization, produce generic static summaries \cite{DBLP:journals/vldb/CebiricGKKMTZ19} that cannot be applied to big KGs. Further, as different persons have different data exploration needs the generated summaries should be tailored specifically to the individual's interests. Although this has already been recognized by the research community, the approaches offering personalized summaries so far, rely on node weights selected by the users, then followed by algorithms making various vague assumptions about the relevant subsets out of the semantic graph that should complement the initial user choice \cite{alzogbi2013similar} \cite{wu2008identifying}. More recent approaches like \cite{DBLP:conf/icdm/SafaviBFMMK19} exploit the individual user queries for mining user preferences but still rely on the KG to compute the summary which makes it computationally hard. Further, capturing a complete individual user query set is usually not feasible.

\textbf{The solution.} Instead of relying on node weights or on individual provided set of user queries, we exploit generic logs already available through the SPARQL endpoints of the various KGs available online. Then in order to generate a personalized summary we only require one or a few nodes the user is most interested in. As previous users have already identified through their queries, the most common connections to the specific user-selected nodes, we exploit this information in order to formulate the generated summaries. More specifically:

\begin{itemize}
  \item We introduce, motivate and formulate the problem of $\lambda/\kappa$-Personalized Summary and we show that although a solution to the problem is rather useful, resolving the problem is both impractical (requires multiple weights assignments) and computationally expensive (NP-complete).

  \item We analytically show how we can resolve the problem relying on existing query logs and we provide a solution to both the multiple weight assignment required and also to the computational problem.

  \item We present an algorithm that provides theoretical guarantees on the summary's quality which is linear in the number of queries available in the query log.
    
  \item We experimentally evaluate our approach using three real-world datasets and the corresponding workloads, i.e. Wikidata, Bio2RDF, and DBPedia, showing the benefits of our approach maximizing coverage for user queries, dominating all baselines and competitors on both quality and efficiency.

\end{itemize}

To the best of our knowledge, this is the first approach to constructing personalized, structural, non-quotient semantic summaries exploiting generic query workloads. 
The rest of this paper is organized as follows:
Section 2 provides preliminaries and problem definition. Then Section 3 presents our solution, iSummary, detailing the various steps for generating a personalized summary.
Section 4 presents the experimental evaluation of our work, whereas Section 4 presents related work.
Finally, Section 5 concludes this paper and presents directions for future work.

\section{Preliminaries \& Problem Definition}

\mpara{Preliminaries.} In this paper, we focus on RDF Knowledge Graphs, as RDF is among the most widely-used standards for publishing and representing data on the Web, promoted by the W3C for semantic web applications. An \textit{RDF KG} $\mathcal{G}$ is a set of \textit{triples} of the form $\triple{s}{p}{o}$.
A triple states that a {\em subject} $s$ has the {\em property} $p$, and the value of that property is the {\em object} $o$. 
We consider only well-formed triples, according to the RDF specification~\cite{W3C-RDF}. These belong to $(\mathcal{U} \cup \mathcal{B}) \times \mathcal{U} \times (\mathcal{U} \cup \mathcal{B} \cup \mathcal{L})$, where $\mathcal{U}$ is a set of Uniform Resource Identifiers (URIs), $\mathcal{L}$ a set of typed or untyped literals (constants), and $\mathcal{B}$ a set of blank nodes (unknown URIs or literals); $\mathcal{U},\mathcal{B},\mathcal{L}$ are pairwise disjoint. Additionally, we assume an infinite set $\mathcal{X}$ of variables that is disjoint from the previous sets.
Blank nodes are essential features of RDF allowing to support {\em unknown URI/literal tokens}. The RDF standard includes the \texttt{rdf:type} property, which allows specifying the type(s) of a resource. Each resource can have zero, one or several types. For querying, we use SPARQL~\cite{w3csparql}, the W3C standard for querying RDF datasets. 
The basic building units of the SPARQL queries are triple pattern and Basic Graph Pattern (BGP). A triple pattern is a triple from $(\mathcal{U} \cup \mathcal{B} \cup \mathcal{X}) \times (\mathcal{U} \cup \mathcal{X})) \times (\mathcal{U} \cup \mathcal{B} \cup \mathcal{L} \cup \mathcal{X})$. A set of triple patterns constitutes a basic graph pattern (BGP).

\mpara{Informal problem statement.} 
Informally the problem we address may be described as follows: Given a knowledge graph $G$, a limited set of $\lambda$ resources that the user wants his/her summary to be focused on,  and a number $\kappa$ denoting the size of the summary (in terms of nodes to be included), efficiently construct a personal summary $G\prime \in G$ that best captures the user's preferred information in G. 

Resolving this problem is really important, as usually users visit a KG with a specific information request in mind, and are used in providing a starting point to begin the KG exploration that will lead to the information they are looking for. Usually, they are not interested in generic summaries of the overall graph, but they would like to identify information pertinent to a specific part of the graph \cite{DBLP:conf/ssdbm/VassiliouTPK21}.

\begin{figure}[htp]
    \centering
    \includegraphics[width=\textwidth]{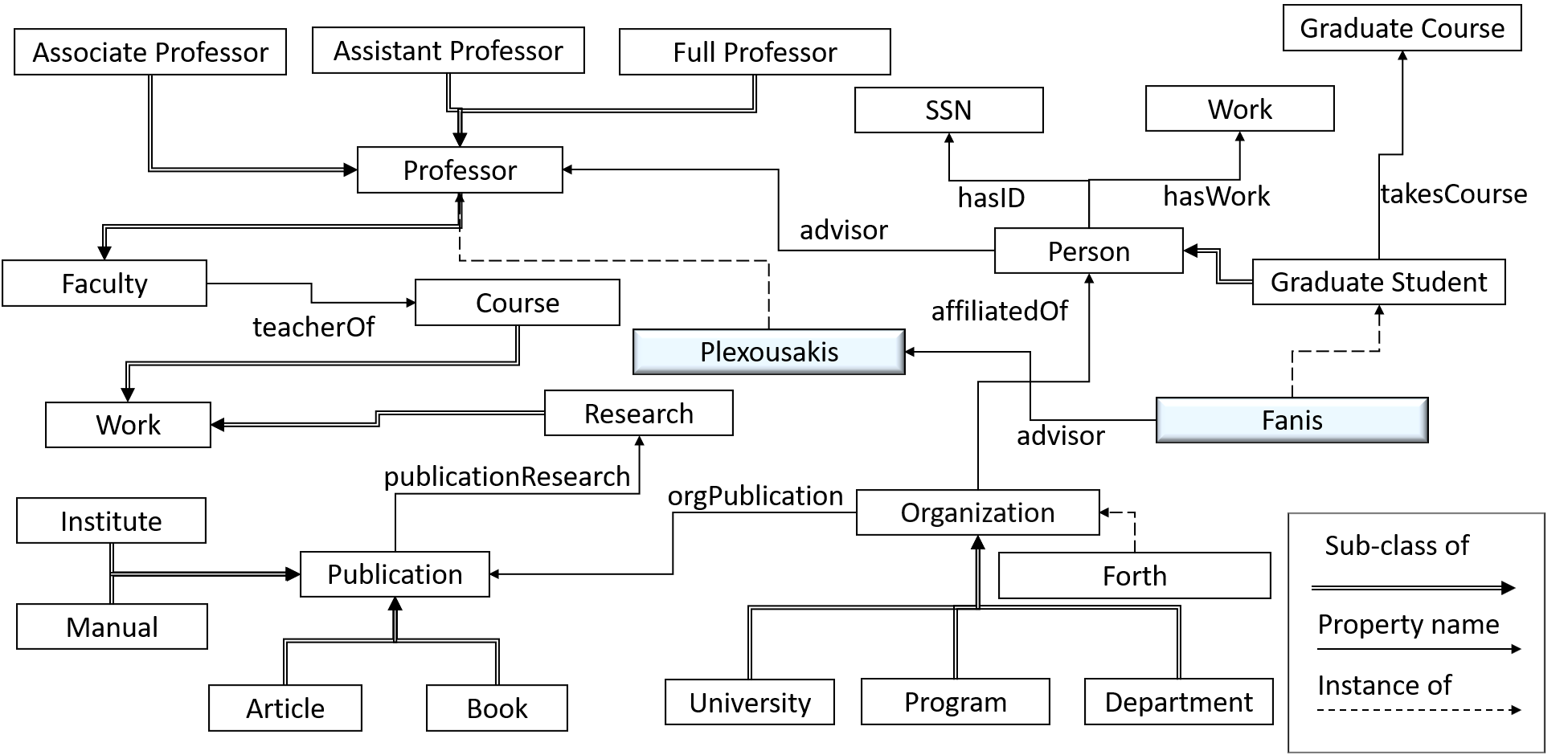}
    \caption{Example RDF KG.}
    \label{fig:examplerdf}
\end{figure}

\begin{example}
Consider as an example, the KG shown in Figure~\ref{fig:examplerdf}, which includes information on the university domain. The figure visualizes persons and organizations and also presents some indicative instances. Note that prefixes are omitted from the figure for sake of clarity. Now assume that the user selects two nodes ($\lambda$=2), i.e. $Plexousakis$ and $Fanis$ (the blue ones), and would like to get a personalized summary of size five ($\kappa$=5). As such, three more nodes should be selected from the graph and linked with the two nodes provided by the user. 

A way to select the three additional nodes and the edges for the result summary, used by previous approaches (e.g., \cite{zhang2007ontology}) is to have weights available on all (or some of) the nodes, and select the nodes maximizing the weight of the selected sub-graph. However, those weights should be specific to user requests. For example, $Publication$ might not be of interest when requesting a summary for $Plexousakis$ and $Fanis$ whereas it might be of interest when requesting a summary for $Research$.
\end{example}


\mpara{Formal problem statement.} 
The previous example makes it obvious that requesting the user to provide weights each time for all (or a least some) of the nodes is impractical. Next, we formally present the problem of $\lambda/\kappa$-Personalized Summary and we show that, although useful, besides impractical it is also computationally expensive.

\begin{definition}[$\lambda/\kappa$-Personalized Summary] 
Given 
(1) a knowledge graph $G=(V,E)$,
(2) a non-negative weight assignment to all nodes, capturing user preferences in $G$, 
(3) $\lambda$ seed nodes,
(4) and a number $\kappa$ ($\lambda \le \kappa$), 
find the smallest maximum-weight tree $G\prime=(V\prime, E\prime) \in G$ including the $\kappa$ most preferred nodes.
\end{definition}

Not that we don't actually require a weight to be assigned to all nodes, as the weight of all nodes can be by default zero, and the user only adds weights to a subset of them. A solution to the $\lambda/\kappa$-Personalized Summary problem is not unique, as there might be many maximum-weight trees with the smallest size that are equally useful for the user. Next, we prove that the aforementioned problem is NP-complete.

\begin{theorem}
The $\lambda/\kappa$-Personalized Summary problem is NP-complete.
\end{theorem}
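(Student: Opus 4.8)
The plan is to handle the two directions of NP-completeness separately, after recasting the optimization problem in decision form: given $G=(V,E)$ with a non-negative node weighting $w$, the $\lambda$ seeds, the bound $\kappa$, a weight target $W$ and a size target $s$, decide whether $G$ contains a subtree that (i)~includes the $\lambda$ seeds, (ii)~includes the $\kappa$ nodes of largest weight, (iii)~has total node weight at least $W$, and (iv)~has at most $s$ nodes. Membership in NP is the routine direction: a candidate summary is itself a subtree of $G$, hence of size polynomial in $|G|$, and given it one checks in polynomial time that it is connected and acyclic, that it contains the seeds, that it contains a maximum-weight set of $\kappa$ nodes (identified by sorting $V$ by $w$), and that its weight and node count meet the targets. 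Thus a valid summary is a polynomially checkable certificate.

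For NP-hardness I would reduce from the Steiner Tree problem in graphs, a classical NP-complete problem: given a graph $H$, a terminal set $T \subseteq V(H)$, and an integer $k$, decide whether $H$ has a subtree spanning $T$ with at most $k$ edges. Given such an instance, build a $\lambda/\kappa$-Personalized Summary instance on the same graph $H$ by taking the $\lambda = |T|$ seed nodes to be exactly the terminals, assigning weight $1$ to every terminal and weight $0$ to every other node, and setting $\kappa = |T|$, so that ``the $\kappa$ most preferred nodes'' are precisely the terminals $T$; finally set the size target to $s = k+1$ nodes (equivalently, $k$ edges). The construction is clearly polynomial.

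Correctness follows because, under this weighting, every subtree of $H$ that contains all of $T$ has total weight exactly $|T|$, which is the maximum attainable; hence the weight-maximization part of the objective is automatically satisfied by every feasible tree, and ``the smallest maximum-weight tree including the $\kappa$ most preferred nodes'' degenerates to ``a minimum-size subtree of $H$ spanning $T$''. Consequently, a feasible tree of at most $k+1$ nodes exists in the constructed instance if and only if $H$ has a Steiner tree for $T$ with at most $k$ edges, so the Steiner instance is a yes-instance iff the personalized-summary instance is.

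I expect the real work to be in reconciling the two competing components of the objective in the definition — first maximize weight, then minimize size, subject to both the ``contains the $\kappa$ preferred nodes'' and the ``contains the $\lambda$ seeds'' constraints. The reduction must be engineered so that one of the two objectives becomes trivial (here a $0/1$ weighting makes every feasible tree weight-optimal, leaving pure Steiner tree; alternatively, making the size constraint non-binding would instead leave a maximum-weight bounded-cardinality subtree problem, reducible from $k$-MST). Beyond that, one has to dispatch the degenerate cases for soundness — in particular $H$ disconnected, $|T| \le 1$, $|V(H)| < k+1$, and the requirement $\lambda \le \kappa$ (here met with equality) — and confirm that the ``smallest'' tie-break in the definition does not alter the yes/no answer.
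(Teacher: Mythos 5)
Your proof is correct, and it is worth noting that it is actually more rigorous than the paper's own argument, even though both hinge on the same anchor problem. The paper's proof sketch transforms a $\lambda/\kappa$-Personalized Summary instance \emph{into} a Steiner tree instance (normalizing weights, subtracting from $1$, zeroing the seeds, and turning maximization into minimization) and then asserts ``equivalence''; as written, that is a reduction in the direction that bounds the problem's hardness from \emph{above}, not the direction needed for NP-hardness, and the paper never addresses membership in NP at all. You instead do both halves properly: you formulate an explicit decision version, observe that a candidate subtree is a polynomially checkable certificate, and then reduce \emph{from} Steiner Tree by taking the terminals as seeds, a $0/1$ weighting that makes every feasible tree weight-optimal, $\kappa=\lambda=|T|$, and a node budget of $k+1$, so that the summary problem degenerates exactly to minimum Steiner tree. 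Your closing observation --- that the reduction must be engineered so that one of the two competing objectives (weight maximization versus size minimization) becomes vacuous --- is precisely the point the paper's sketch glosses over. The only thing the paper's version ``buys'' is brevity and a hint at how general weight assignments relate to edge costs; your version buys an argument that actually establishes the stated theorem.
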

\begin{proof}
The Steiner tree problem \cite{DBLP:journals/networks/Hakimi71}, focuses on connecting selected nodes of a weighted graph at minimum cost. In our case, we normalize weight assignments from 0 to 1 and subtract them from 1. Further, we set the weight of the $\lambda$ seed nodes to be equal to zero. Now instead of finding a maximum-weight tree, we search for a minimum-weight tree, connecting the seed nodes with the $\kappa-\lambda$ minimum weight nodes. As such our problem is equivalent to the Steiner tree problem which has been shown to be NP-complete.
\end{proof}

A nice property of the $\lambda/\kappa$-Personalized Summaries is that their quality is \textit{monotonically increasing} as the $\kappa$ increases. This means that as the summary size increases more relevant information is added to the summary for the same seed nodes selected by the user.

\begin{lemma}
Let $S_{\kappa}$ be a $\lambda/\kappa$-Personalized Summary and $S_{\kappa+1}$ be a $\lambda/(\kappa+1)$-Personalized Summary for $G$. Then $W(S_{\kappa+1}) \ge W(S_{\kappa})$, where $W(S)$ the sum of all node weights in $S$.
\end{lemma}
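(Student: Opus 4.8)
The plan is to argue monotonicity by an \emph{extension argument}: take the given optimal tree $S_{\kappa}$, grow it by exactly one node into a feasible candidate for the $\lambda/(\kappa+1)$-Personalized Summary problem while never losing weight, and then invoke the optimality of $S_{\kappa+1}$. First I would make the feasibility notion explicit: by the definition of a $\lambda/\kappa$-Personalized Summary, let $\mathcal{T}_{\kappa}$ denote the family of all subtrees of $G$ on exactly $\kappa$ nodes that contain the $\lambda$ seed nodes, so that $S_{\kappa}\in\mathcal{T}_{\kappa}$ with $W(S_{\kappa})=\max_{T\in\mathcal{T}_{\kappa}}W(T)$, and analogously $S_{\kappa+1}\in\mathcal{T}_{\kappa+1}$ is weight-maximal over $\mathcal{T}_{\kappa+1}$. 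Note that only optimality of $S_{\kappa+1}$ and feasibility of $S_{\kappa}$ are actually needed, and that the ``smallest'' qualifier of the definition is irrelevant here since we compare only total node weights.

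The core step is the extension. Assuming $\kappa+1\le|V|$ (otherwise the larger summary is undefined) and that the seed nodes all lie in a single connected component of $G$ — the only component relevant to the problem — there must exist a node $v\in V\setminus V(S_{\kappa})$ that is adjacent in $G$ to some node of $S_{\kappa}$; otherwise $V(S_{\kappa})$ would be an entire connected component of size $\kappa<\kappa+1\le|V|$ containing the seeds, contradicting the existence of $S_{\kappa+1}$. Adding $v$ together with one edge joining it to $S_{\kappa}$ produces a subtree $S_{\kappa}^{+}$ on $\kappa+1$ nodes that still contains all seed nodes, i.e.\ $S_{\kappa}^{+}\in\mathcal{T}_{\kappa+1}$, and $W(S_{\kappa}^{+})=W(S_{\kappa})+w(v)$.

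Finally I would chain the inequalities. Since every node weight is non-negative we have $w(v)\ge 0$, hence $W(S_{\kappa}^{+})\ge W(S_{\kappa})$; and since $S_{\kappa+1}$ maximizes $W$ over $\mathcal{T}_{\kappa+1}$ while $S_{\kappa}^{+}\in\mathcal{T}_{\kappa+1}$, we obtain $W(S_{\kappa+1})\ge W(S_{\kappa}^{+})\ge W(S_{\kappa})$, which is exactly the claim.

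The main obstacle is the extension step rather than the final chaining: it hinges on there always being a node of $G$ adjacent to $S_{\kappa}$ but not in it, which is where connectivity of the relevant component and the bound $\kappa+1\le|V|$ enter. I would therefore state that mild assumption explicitly at the start of the proof (it is implicit in the problem setup, since otherwise the $(\kappa+1)$-summary would not exist); everything else reduces to non-negativity of the weights together with the optimality of $S_{\kappa+1}$.
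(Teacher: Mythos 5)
Your proposal is correct and follows essentially the same route as the paper's (much terser) proof: extend the optimal $\kappa$-node solution by one node to obtain a feasible candidate for the $(\kappa+1)$-problem, then combine non-negativity of the weights with the optimality of $S_{\kappa+1}$. You simply make explicit the details the paper leaves implicit, namely the existence of an adjacent node to attach and the mild connectivity/size assumptions that guarantee it.
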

\begin{proof}
As $S_k$ is a maximum-weight tree for $\lambda$ including $\kappa$ nodes, adding one more node in the summary and looking for the maximum-weight tree including that node as well guarantees that the total weight of $S_{k+1}$ will be equal or greater than the total weight of $S_k$.
\end{proof}

Over the years many approximations have been proposed for resolving the Steiner Tree problem \cite{DBLP:journals/networks/HwangR92,DBLP:journals/dam/Voss92} that could be exploited for resolving the $\lambda/\kappa$-Personalized Summary problem as well. 
CHeapeast INSertion (CHINS) one of the fastest approximation algorithms has a worst time complexity of $O(\kappa \times 2 \lvert V \rvert  \times log \lvert V \rvert)$. CHINS starts with a partial solution consisting of a single selected node and it incrementally adds the nearest one of the selected not yet in the solution. However, still, computing a Steiner Tree approximate solution over commodity hardware for a large KG such as WikiData is not feasible. For example, assuming 1$\mu$s for each operation, running CHINS for WikiData that includes 1.4 billion statements would require more than a year to calculate a 5/10-Personalized Summary. 

For the rest of the paper, without loss of generality, we will focus on 1/$\kappa$-Personalized Summaries (in short $\kappa$-Personalized Summaries), where the user provides only a single seed node as input, for not perplexing definitions and algorithms and due to space limitations. Extending the presented solution and algorithms for multiple seed nodes is straightforward.

\section{iSummary}
As we have shown in the previous section, computing the $\kappa$-Personalized Summary is both \textit{impractical}, as different weights should be assigned to the graph nodes for each distinct user query, and computationally \textit{expensive}, as it requires computing a Steiner Tree solution. In this section, we are going to provide an elegant approximate solution based on query workloads.

\mpara{Resolving the problem of multiple weight assignments.} 
Assume now that for the KG $G$ we have available a query log $Q=\{q_1, \cdots, q_n\}$ available. This assumption is reasonable, as all big KGs offer a SPARQL endpoint that logs user queries for various purposes. Multiple studies already confirm this (e.g., \cite{DBLP:journals/vldb/BonifatiMT20}), and we were also able to easily get access to such logs for DBpedia, WikiData, and Bio2RDF (more about this in Section~\ref{sec:evaluation}).

Having such a query log available, our first idea is that we can \textit{use it to mine user preferences} for the specific seed node that the user is interested in. The idea here is that if a user is interested in a $\kappa$-Personalized Summary for $s$ then we can use $Q$ to identify \textit{relevant queries} to $s$, i.e., queries that include $s$. In those queries, other nodes relevant to the user input will be available. In fact, as those queries have been issued by thousands of users, we assume that \textit{the most useful related nodes will be the ones that appear more frequently} there.
 
\begin{example}
Assume that for our example KG, shown in Figure \ref{fig:examplerdf}, we have available a query log consisting of the following SPARQL queries:
\begin{verbatim}
Q1. SELECT ?x ?y WHERE 
    {x? a Person. y? a Professor. ?x advisor ?y.}
Q2. SELECT ?x ?y WHERE 
    {x? a Person. y? a Organization. ?y affiliatedOf ?x.}
Q3. SELECT ?x ?y WHERE 
    {x? a Person. y? a Organization. ?y affilatedOf ?x. 
    ?y orgName "FORTH".}
Q4. SELECT ?y WHERE 
    {y? a Organization.}
Q5. SELECT ?y WHERE 
    {y? a Publication. ?x authored ?y. ?x a Institute.}
\end{verbatim}
Now assume that a user is interested in a 2-Personalized Summary for the node $Person$. Based on the query log we can identify that relevant queries to user input are $Q1$, $Q2$, and $Q3$. Examining those queries we can identify that the useful nodes are the $Professor$ and $Organization$. In fact, as $Organization$ is used in two queries it should be most useful according to the available query log. As we are looking for a 2-Personalized Summary it will be included in the result. On the other hand, if the user is interested in a 2-Personalized Summary for the node $Publication$ the relevant query is $Q5$ which suggests that the $Institute$ node should be included in the result.
\end{example}

Based on this assumption we can have multiple weight assignments, one per user input, as they occur from thousands of user queries that involve the provided user input and that are based on past users' preferences, as expressed in their queries. Note here that we don't need weights for the whole graph, as by default we can set the weight of the nodes that do not appear in the filtered user queries to zero.

\mpara{Resolving the computational problem.} Now that we have a way to assign personalized weights to the nodes, we will provide a computationally efficient procedure in order to link the selected nodes over a big graph. We will stick to the ideas proposed by the CHINS approximation algorithm. We will start with a solution including a single node, the $s$ selected by the user, adding one node each time of the ones with the maximum weight till all remaining $k-1$ nodes are included in the summary. However, for doing so we will not use the original data graph but again \textit{relevant user queries}. The main idea here is the following: link $s$ with the $k-1$ maximum weight nodes using \textit{the most frequent shortest paths} from the user queries. 

\begin{example}
We now continue our example for constructing a 2-Personalized Summary for the node $Person$. As we have already explained the node  $Organization$ has the higher frequency in queries involving $Person$ and as such it will be selected to be included in the summary. Now instead of searching the graph shown in Figure~\ref{fig:examplerdf} for linking $Person$ with $Organization$ we will additionally filter queries including $Person$ keeping only the ones including $Organization$ as well. Those are Q2 and Q3. For each one of those queries, we calculate the shortest path for linking $Person$ and $Organization$ and we eventually select the most frequent shortest path to include in the summary. As such the 2-Personalized Summary for the node $Person$ includes a single triple $t_1: \triple{Organization}{addiliatedOf}{Person}$. 

In the case we are interested in a 3-Personalized Summary for the node $Person$, the summary would have to include the $Professor$ node as well. To link $Person$ with $Professor$ we would filter the queries to keep only those where both $Professor$ and $Person$ appear, e.g. $Q1$. Then for linking those nodes, we would keep the most frequent shortest path, i.e., $t_2: \triple{Person}{advisor}{Professor}$. Now the 3-Personalized Summary for $Person$ would include both $t_1$ and $t_2$.

\end{example}


\subsection{The algorithm}
Now we are ready to present the corresponding algorithm for constructing a $\kappa$-Personalized Summary for an input node $s$.
The algorithm is presented in Algorithm~\ref{algo} and receives as input $\kappa$, $s$ and a query log $Q$. 
It starts by including the first node in the summary (line 2), the one selected by the user. Then it filters the queries to keep only $Q_s$, i.e., the ones including $s$ (line 3).
Next, it calculates the frequency of all nodes in $Q_s$ and selects the $k-1$ ones with the higher frequency to be included in the result summary (line 4), i.e., the $top_{k-1}$ ones.

The next step is to visit one by one these nodes each time identifying an optimal way to link each one of those nodes not in the summary with the ones already added (lines 5-13).
More specifically for each node in $top_{k-1}$ not already in the summary we explore all nodes in the summary by filtering again the queries in $Q_S$ retrieving $Q_{sxy}$ that contains $x$ and $y$ (line 8). 
Then for each query in $Q_{sxy}$, we find the shortest path linking $x$ with $y$ (line 10) and we keep the most frequent one (line 11) to formulate the result summary. Eventually, we select to link the next node in the $top_{k-1}$ with the most frequent shortest path linking that node with all nodes currently in the summary. However, as we identify paths in the queries, those might include variables that we should replace with actual resources. This is accomplished by replacing them with resources mined from other queries which might have both the specific resource and its neighbors instantiated (line 13). Finally, we return to the user the constructed set of triples $S$ as a summary (line 14).

\begin{algorithm}
\caption{iSummary}
\label{algo}
\begin{flushleft}
\textbf{Input:} An user-selected node $s$, a query workload $Q$, the  number of the most useful nodes to be included in the summary $\kappa$. \\
\textbf{Output:} $S$ a $\kappa$-Personalized summary for $s$
\end{flushleft}
\begin{algorithmic}[1]
\STATE $ S \gets \emptyset$
\STATE $visited \gets \{s\}$
\STATE $ Q_s \gets filter(Q, \{s\})$      
\STATE $ top_{k-1}\gets selectTopNodes(Q_s, k-1)$
\FORALL{$x\in top_{k-1}, x\notin visited$}
    \STATE $selectedPath \gets \emptyset$
    \FORALL{$pairs(x,y), y \in visited$}
        \STATE $Q_{sxy} \gets filter(Q_s,\{x,y\}) $ 
        \FORALL{$q \in Q_{sxy}$}
            \STATE $shortestPaths[q] \gets getShortestPathFromQuery(q, \{x, y\})$
        \ENDFOR   
        \STATE $selectedPath \gets findMostFrequent(shortestPaths, selectedPath)$
    \ENDFOR
        
    \STATE $visited \gets visited \cup \{x\}$        
    \STATE $ S \gets S \cup resolveVariables(selectedPath, Q)$
\ENDFOR
\STATE \textbf{return}  $S$
\end{algorithmic}
\end{algorithm}


The result produced by the aforementioned algorithm is deterministic based on its implementation, as  in the case of ties, these are broken by keeping the first choice. However as already explained a personalized workload-based summary might not be unique as many nodes can have the same frequency in the available queries, or there might be available many different shortest paths to connect them. Next, we prove that iSummary is able to find an approximate solution with specific guarantees:

\begin{theorem}
The iSummary algorithm finds an approximate solution to the $\kappa$-Personalized Summary problem with a worst-case bound of 2, i.e., $W/W_{opt} \leq 2 \times (1-l/k)$, where $W$ and $W_{opt}$ denote the total weight of a feasible solution and an optimal solution respectively, and l a constant.
\end{theorem}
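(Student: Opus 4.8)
The plan is to reduce the analysis to the classical approximation guarantee of the cheapest-insertion heuristic (CHINS) for the Steiner tree problem, reusing the reduction employed in the NP-completeness proof above. Recall that there the maximum-weight instance was transformed into a minimum-weight (Steiner) instance by normalising node weights into $[0,1]$, replacing each weight $w$ by $1-w$, and setting the seed weight to $0$; under this transformation the $\kappa$-Personalized Summary corresponds to a minimum-cost tree connecting the terminal set $R=\{s\}\cup top_{k-1}$. It therefore suffices to show that the tree $S$ returned by iSummary has cost at most $2(1-l/\kappa)$ times the cost of an optimal Steiner tree on $R$, where $l$ is the number of leaves of the optimal tree (hence a constant, $l\le\kappa$), since weights map monotonically between the two formulations.

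First I would make explicit the correspondence between iSummary and CHINS. iSummary starts from the partial solution $\{s\}$ and, at each iteration, takes the next terminal from $top_{k-1}$ and connects it to the current partial solution through the cheapest available path, realised here as the most frequent shortest path extracted from the queries in $Q_{sxy}$. Abstracting away the query log, this is exactly cheapest insertion on the metric closure of $G$ restricted to $R$: define $d(x,y)$ as the weight of a shortest $x$--$y$ path, observe that $d$ satisfies the triangle inequality, and note that the cost of the tree iSummary builds is dominated by the cost of the cheapest-insertion tree in $(R,d)$.

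Next I would invoke the standard doubling argument to bound the cheapest-insertion tree against the optimum. Take an optimal Steiner tree $T_{opt}$ of cost $W_{opt}$, double its edges to obtain an Eulerian multigraph, and take an Euler tour of cost $2W_{opt}$; shortcutting the tour over the terminals yields a Hamiltonian cycle on $R$ of cost at most $2W_{opt}$, and deleting its heaviest edge yields a Hamiltonian path---a spanning tree of $(R,d)$---whose cost is at most $2W_{opt}\,(1-l/\kappa)$ once the deleted edge is charged against the leaves of $T_{opt}$. Since cheapest insertion is known to have the same worst-case ratio as the MST heuristic for metric spanning-tree/Steiner problems, and since $W$ is dominated by the cheapest-insertion cost by the previous step, we conclude $W/W_{opt}\le 2(1-l/\kappa)$.

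The main obstacle is the second step: the paths iSummary extracts are shortest paths \emph{inside individual SPARQL queries}, chosen by frequency rather than by true graph distance, so they need not be genuine shortest paths in $G$ and the ``distances'' they induce need not globally satisfy the triangle inequality. To make the bound rigorous one must argue that the effective search space is the union of the query BGPs touching $s$, and that within it the frequency-maximising choice coincides with, or is dominated by, the distance-minimising choice that CHINS would make. Once this identification is in place, the classical CHINS analysis transfers essentially verbatim and yields the stated ratio, with $l$ the number of leaves of the optimal tree over the terminal set $R$.
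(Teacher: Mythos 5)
Your proposal follows essentially the same route as the paper's own proof, which is only a two-sentence sketch: identify iSummary with the CHINS (cheapest-insertion) heuristic for Steiner trees and import its known worst-case ratio of $2(1-l/\kappa)$. You go considerably further than the paper does --- you make the max-weight-to-min-weight reduction explicit, you spell out the Euler-tour/shortcutting argument behind the CHINS bound, and you correctly identify $l$ as the number of leaves of the optimal tree, which the paper leaves as an unexplained ``constant.''

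The ``main obstacle'' you flag in your last paragraph is, however, a genuine gap, and it is not closed by the paper either: the paper simply asserts that iSummary ``replicates'' CHINS ``using the queries in order to reconstruct the part of interest of the original graph'' and declares that the proof follows. Concretely, the connections iSummary inserts are shortest paths \emph{within individual query BGPs}, selected by frequency across queries rather than by cost; there is no argument that these coincide with (or are bounded by) shortest paths in $G$, that the induced distances satisfy the triangle inequality, or that a frequency-maximizing choice is a cost-minimizing one. Without such an argument the CHINS analysis does not transfer, so the bound really only holds for the surrogate graph spanned by the filtered queries under the query-derived weights --- which is the best reading of what the paper intends. One further point you pass over too quickly: approximation ratios are not preserved under the affine transformation $w \mapsto 1-w$ that converts the maximization objective into a Steiner minimization, so ``weights map monotonically between the two formulations'' does not by itself justify carrying the ratio back; the theorem's inequality only makes literal sense for the minimization reformulation (for a maximization problem $W/W_{opt}\le 1$ trivially, while $2(1-l/\kappa)$ can exceed $1$). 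In short, you have reconstructed the intended argument faithfully and located exactly where it is incomplete; the paper does not supply the missing step either.
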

\begin{proof} (sketch)
In essence, iSummary replicates the CHINS approximation algorithm which has been proved to have the aforementioned worst-case bound \cite{hawking1988} using the queries in order to reconstruct the part of the interest of the original graph. For the remaining nodes that do not appear in the filtered queries, we set their weight to zero. The proof follows.
\end{proof}

To identify the complexity of the algorithm we should first identify the complexity of its components. Assuming $ \lvert Q \rvert$ the number of queries in the available workload we first need to scan them once for filtering and retrieving the $top_{k-1}$ nodes, i.e.  $O(\lvert Q \rvert)$. 
Then for each node in the $top_{k-1}$ we need to gradually include them in the visited set by checking their connection to all existing nodes in the summary. This will result in $k^2$ iterations in each of which the $Q_s$ queries should be filtered, i.e. $O( k^2 \times \lvert Q_{s} \rvert)$. Then for each query appearing in the filtering results we should run once the Dijkstra algorithm for getting the shortest path. At the worst case for each node we need to calculate the shortest paths for all queries, i.e. $O( k^2 \times \lvert Q_s \rvert \times \lvert V_{Q_s}^2 \rvert )$, where $V_{Q_s}^2$ the maximum number of nodes that appear in the queries in the workload. Overall the complexity of the algorithm is 
 \[ O(\lvert Q \rvert) + O( k^2 \times \lvert Q_s \rvert \times \lvert V_{Q_s}^2 \rvert ) \leq O( k^2 \times \lvert Q \rvert \times \lvert V_{Q}^2 \rvert ) \]    

However, usually, the number of nodes requested by the user to be included in the summary is small. Also, the number of nodes in the queries is limited (usually $\leq 10$), and as such we can safely replace $V_{Q}^2$ with a constant, eventually showing that the algorithm scales linearly to the number of queries in the workload.

\mpara{Limitations.} The aforementioned algorithm provides an elegant solution to the $\kappa$-Personalized Summary problem and can be trivially extended for the $\lambda/\kappa$-Personalized Summary problem as well, just by searching for queries including the $\lambda$ nodes and then exploiting those queries to link them in the summary. However, it assumes that \textit{adequate} queries are available in the query log. In other words, it assumes that a) there are queries available including user input, and b) that there are at least $\kappa$ other nodes available in those queries. These assumptions hold for popular online KGs which can easily log user queries but might not hold for other less popular KGs. As such our approach should be considered complementary to approaches working directly on the graphs of the KGs. However, as we showed the problem is NP-complete, and neither existing approximate solutions nor competitors (as we will show) will terminate within a reasonable time.

\section{Experimental Evaluation} 
\label{sec:evaluation}

In this chapter, we present the experiments performed for evaluating our approach using three real world datasets along with the corresponding query workloads.
The source code and guidelines on how to download the datasets and the workloads are available online\footnote{\url{https://anonymous.4open.science/r/iSummary-47F2/}}. 

\subsection{Setup}
\mpara{Implementation.} 
The iSummary was developed using Java. In addition, the evaluation was performed using windows 10 with an Intel\textsuperscript{\textregistered} Core\textsuperscript{TM} i3 10100 CPU @ 3.60GHz (4 cores) and 16 GB RAM.



\mpara{Datasets.}  
The first dataset we use is DBpedia v3.8 along with the corresponding query workload. DBpedia v3.8 consists of 422 classes, 1323 properties, and more than 2.3M instances. The available query workload is 16.3 MB including 58,610 queries. 

WikiData is a free and open knowledge base that can be read and edited by both humans and machines. Wikidata contains 100 million items, and 1.4 billion statements and covers many general topics for knowledge exploration and data science applications. The query workload for WikiData was retrieved from~\cite{DBLP:conf/semweb/MalyshevKGGB18} and includes 192,325 queries
 
Bio2RDF is a biological database that uses semantic web technologies to provide interlinked life science data and includes more than 11 billion triples \cite{DBLP:conf/semweb/DumontierCCAEBD14}.
The query workload for Bio2RDF was retrieved from the corresponding SPARQL endpoint and includes 3,616,330 queries.

\subsection{Metrics} 
We already have proven the theoretical bound of our algorithm in terms of quality when compared to an optimal solution. In addition, as it is not feasible to compute the optimal solution for our big graphs for evaluating the quality of the generated algorithms
we use \textbf{coverage}. Coverage has been proved rather useful in evaluating structural, non-quotient semantic summaries in the past
\cite{DBLP:journals/algorithms/TrouliPTKPK23},
\cite{DBLP:conf/semweb/TrouliTKPK21}, 
\cite{DBLP:conf/semweb/TroullinouKSP18},
\cite{DBLP:conf/ssdbm/VassiliouTPK21},
\cite{DBLP:confVassiliouTPSPK21}. The idea behind coverage is that, ideally, we would like to maximize the fragments of the queries that are answered by the summary. More specifically, a summary that is able to provide answers to bigger and more query fragments from the query workload is preferable.
However, as we are generating personalized summaries, we would like the generated summaries to maximize the number and fragments that include the input provided by the user. As such, we define coverage as follows:

%


\begin{definition}[Coverage]  Assuming a $\kappa$-Personalized summary $S$ for $s$, a query workload $Q$, and two weights for nodes and edges, i.e. $w_n$ and $w_p$, we define coverage as follows:
	\begin{equation}
Coverage(Q, S, s)= \frac {1}{n} \sum_{s \in q_i } (w_n \frac{ snodes(S, q_i)}{nodes(q_i)} + w_p \frac{sedges(S, q_i)}{edges(q_i )})
	\end{equation}

 where $nodes(q_i)$ and $edges(q_i)$ denote the number of nodes and edges respectively in $q_i$, and $snodes(S, q_i)$ and $sedges(S, q_i)$ denote the number of nodes and edges respectively that appear in $S$.
\end{definition}
	
In our experiments, we set $w_n$=0.5 and $w_p$ = 0.5 as we perceive both nodes and edges as equally important in a summary.
	



\subsection{Baselines \& Competitors} 
To evaluate our system, we use for each query workload a percentage of the queries for constructing the personalized summary (train queries) and the remaining queries for evaluating node selection and coverage of the constructed summary (test queries).

We compare our approach with a \textit{random} baseline, where we randomly select nodes and edges from the train queries that involve user selection to be included in the summary and then evaluate node selection and coverage over the test queries.

In addition, we compare our approach with another summarization method for personalized summaries \textit{GLIMPSE} \cite{DBLP:conf/icdm/SafaviBFMMK19} which tries to maximize a user’s inferred “utility”
over a given KG, subject to a user- and device-specific constraint on the summary’s size.

Finally, we explore an approximate version of the \textit{personalized PageRank}~\footnote{https://github.com/asajadi/fast-pagerank} which works directly on the KG trying to identify the most important nodes and paths given a start node through random walks.

\subsection{Coverage for Various Query Log Sizes} 
In the first experiment, we try to understand what is the size of the query log required for getting high-quality results in terms of coverage for iSummary. As such, we keep a random 20\% of the queries for testing and we use the remaining for the training. We gradually increase the percentage of queries considered and report the average coverage each time. We randomly pick a node to be used as a seed node for construction summaries for k=5, 10, and 15. We repeat the experiment 10 times (10 fold-cross validation). The results are shown in Figure~\ref{fig:queriessize}.

\begin{figure}[t]
    \centering
    \includegraphics[width=\textwidth]{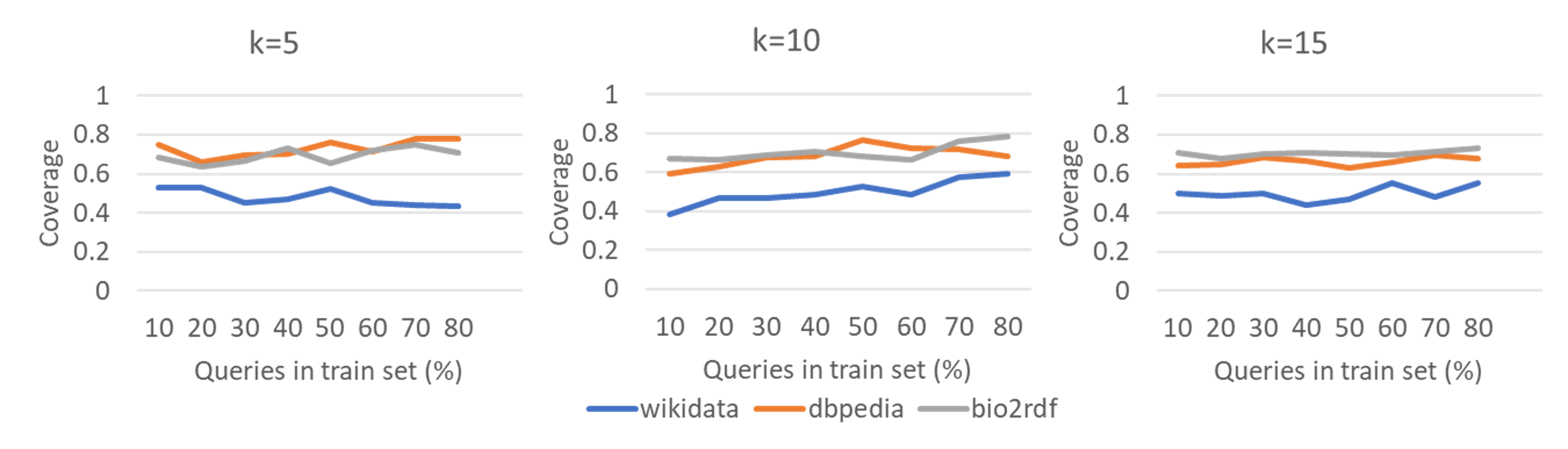}
    \caption{Coverage as the number of queries increases.}
    \label{fig:queriessize}
\end{figure} 

As shown, query coverage even with 10\% of the queries (i.e., ~6000 queries for DBpedia) is more than 0.4. Further it is not significantly increased as more queries are considered for constructing the summary. This  shows that our method is able to generate high coverage summaries even with relatively small size of queries. 

In addition, we can see that the worst coverage is for  wikidata. Trying to identify the reason for this we identified that in Wikipedia on average the queries include 3.9 triple patterns, in Bio2RDF 1.5 triple patterns, and in DBpedia 1 triple pattern.
Based on this we can conclude that larger queries introduce more nodes on average for coverage evaluation (in the denominator of equation (1)), and as such the coverage drops.

\subsection{Comparing Coverage}
Next, we compare iSummary with baselines and competitors. For iSummary and Random we randomly select 80\% of the queries for training and 20\% for testing. For the same test queries each time we evaluate also coverage for PPR and GLIMPSE. We randomly select 10 seed nodes for generating a personalized summary for k= 5, 10, 15. We repeat 10 times the aforementioned procedure (10 fold cross-validation). 

 \begin{figure}[H]
    \centering
    \includegraphics[width=0.8\textwidth]{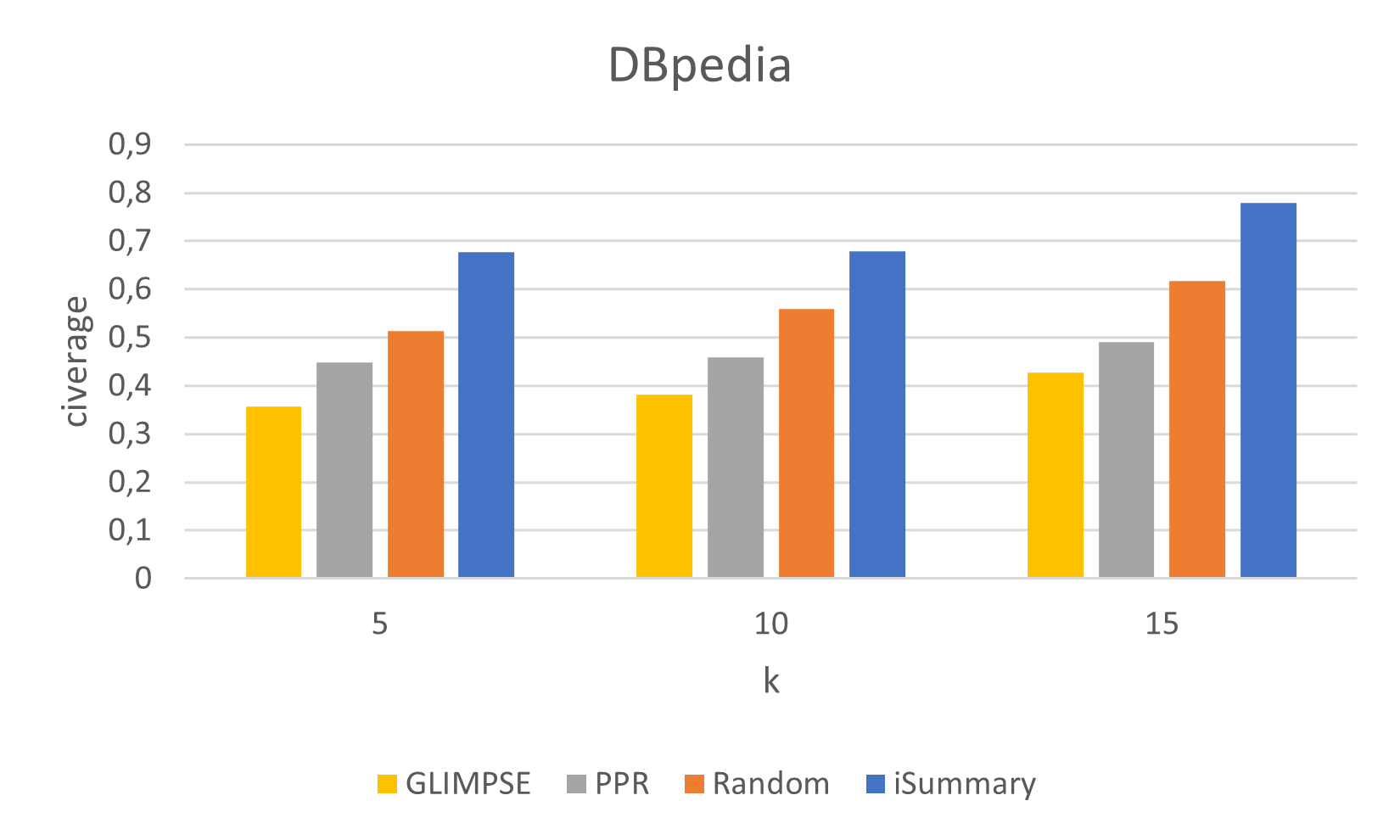}
    \caption{DBpedia coverage for various $k$ and baselines.}
    \label{fig:reultsdbpedia}
\end{figure} 

The results for DBpedia are shown in Fig.~\ref{fig:reultsdbpedia}. As shown approaches that work on the data graph have worst coverage than the ones working directly on the queries. GLIMPSE performs worst for all cases, as providing just a node as an input is not enough for GLIMPSE to provide a high-quality summary in terms of coverage. PPR has better results than GLIMPSE, but still, it is outperformed by both Random and iSummary. Note that Random is not purely random as it randomly selects nodes and edges to construct a summary from the queries involving the input node. As shown iSummary outperforms all baselines almost two times when compared with GLIMPSE, random by 17-24\% and PPR by 32-37\%.

\begin{figure}[H]
    \centering
    \includegraphics[width=0.8\textwidth]{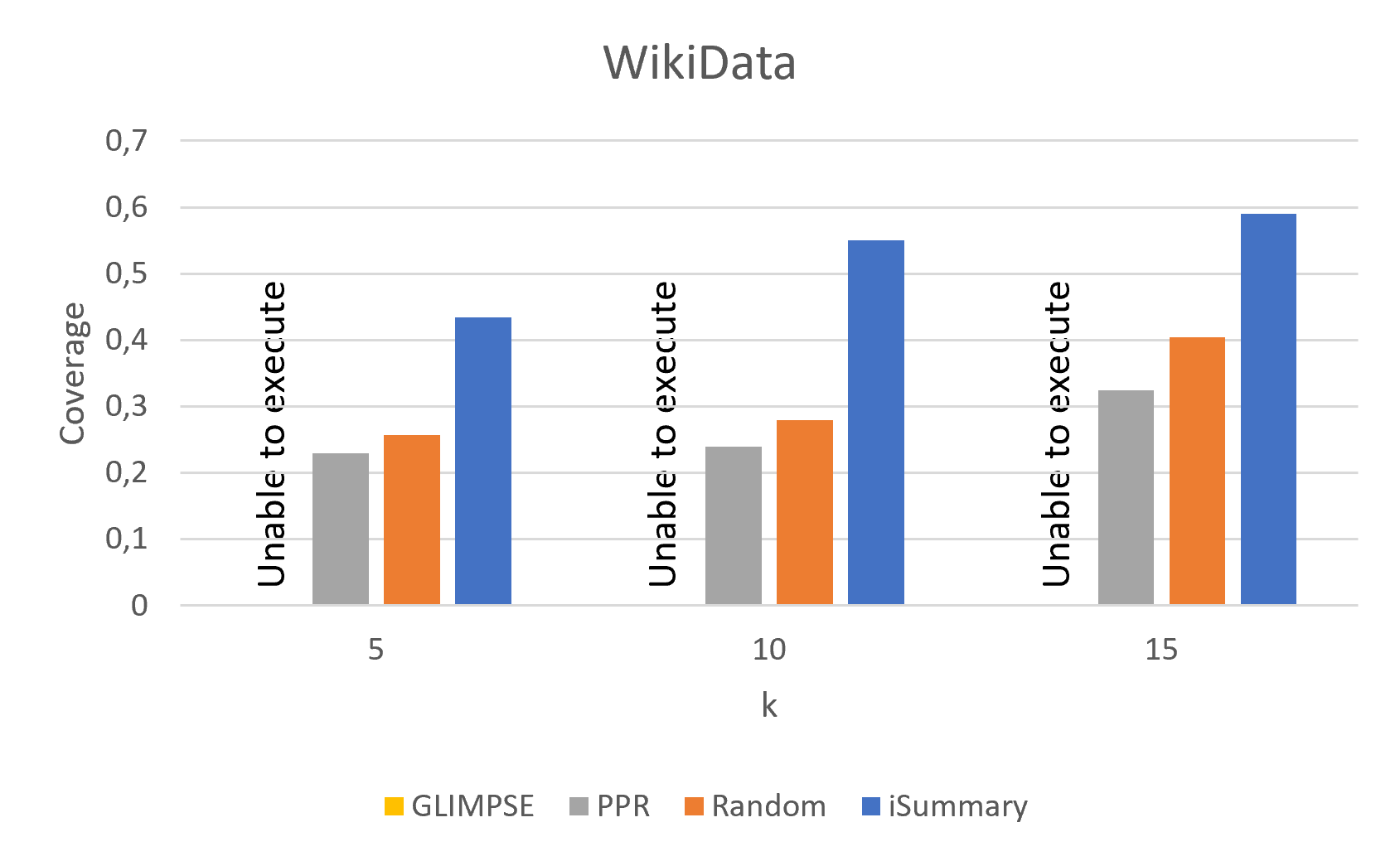}
    \caption{Wikidata coverage for various $k$ and baselines.}
    \label{fig:reultswikidata}
\end{figure}

The same trend appears for WikiData as shown in Fig.~\ref{fig:reultswikidata}. GLIMPSE is not able to produce output for such a big graph in our machine as it fully loads the memory and the application crashes after some time. In this case, iSummary dominates the remaining baselines, achieving in most of the cases a two times higher coverage.

\begin{figure}[H]
    \centering
    \includegraphics[width=0.8\textwidth]{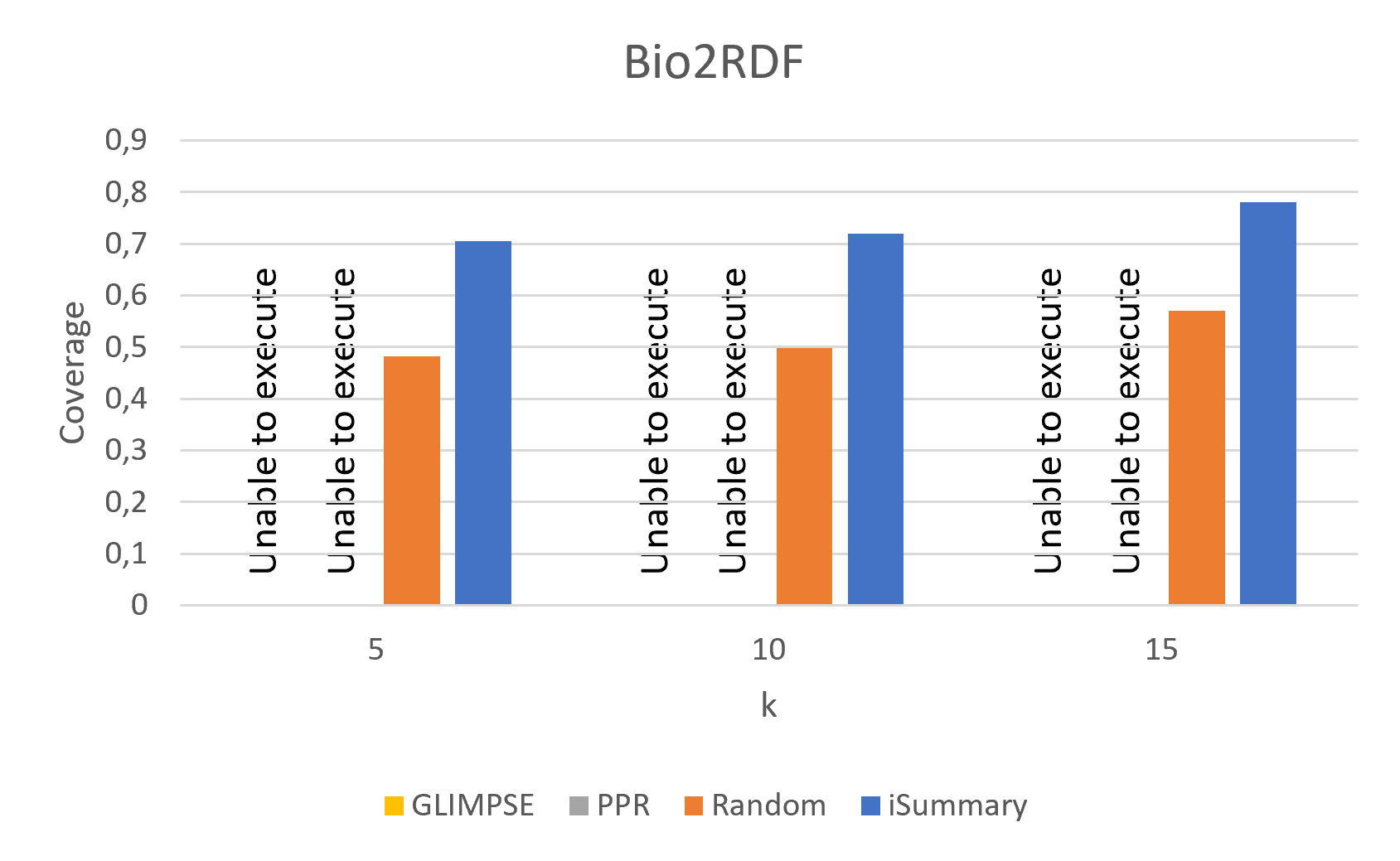}
    \caption{Bio2RDF coverage for various $k$ and baselines.}
    \label{fig:reultsbio2RDF}
\end{figure}

Finally, results for Bio2RDF are shown in Fig.~\ref{fig:reultsbio2RDF}. Now even PPR cannot process such a big graph and after 24 hours and we stopped its execution. Again iSummary is better than Random by 25-30\%.

Overall, as we can see in all cases our approach has consistently better results than all baselines, demonstrating the high quality of the generated summaries. We can also notice that as the size of the personalized summary increases ($\kappa$=5,10,15) the coverage increases as well, as more nodes are added to the summary. 
Note also that as the size of WikiData queries is larger than the other datasets it is reasonable to be a bit more difficult to cover them and as such coverage is smaller.
Nevertheless, the algorithm shows stability among different datasets always dominating other approaches.

\begin{figure}[h]
    \centering
    \includegraphics[width=0.8\textwidth]{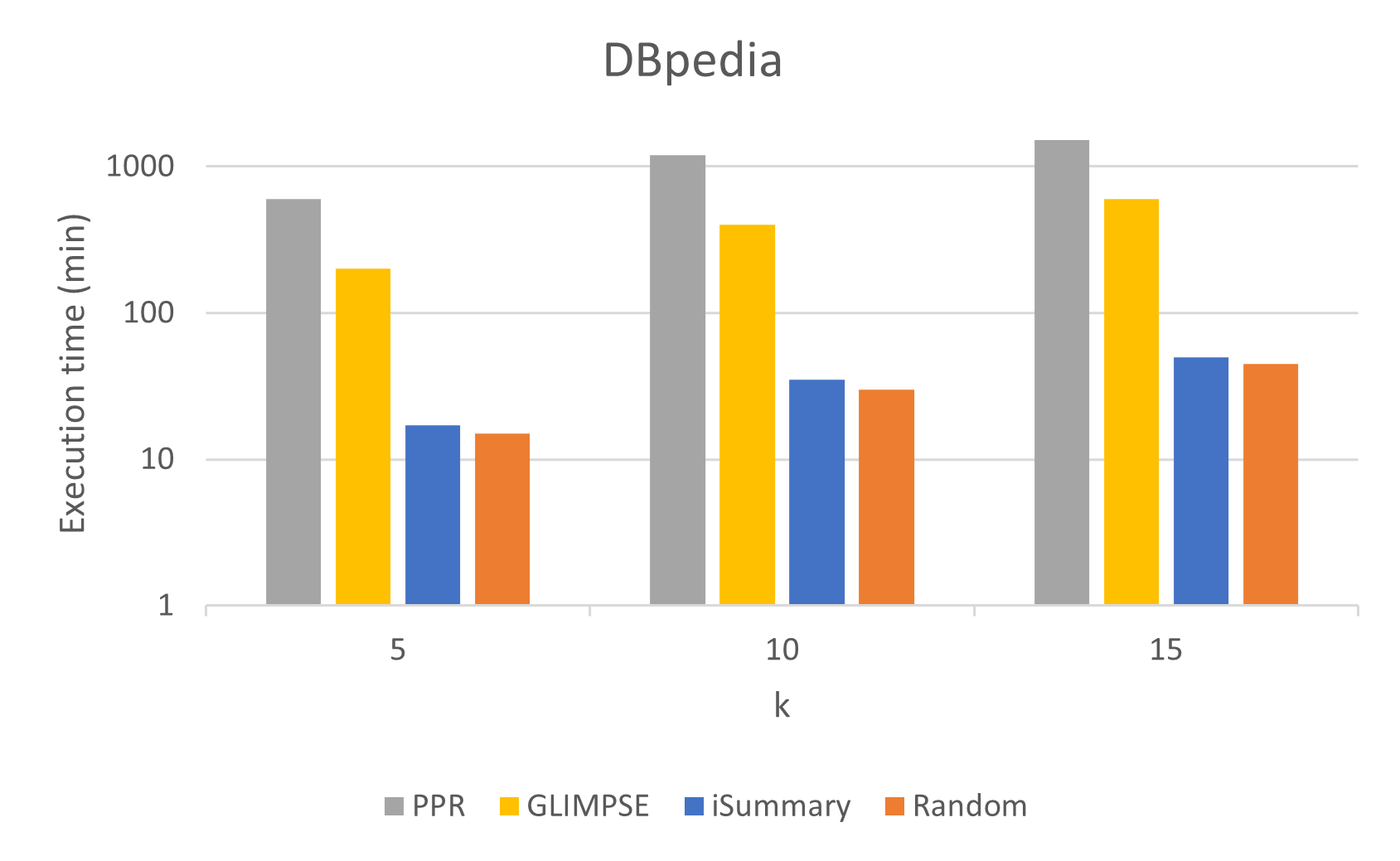}
    \caption{Execution time for the various $k$ and algorithms}
    \label{fig:execx}
\end{figure} 

\subsection{Comparing Execution Time} 
The average execution times for the various algorithms, for different $\kappa$ are presented in Figure~\ref{fig:execx}. We only present results for DBpedia as it is the only dataset that all competitors are able to run. As shown, approaches relying on the KG (PPR and GLIMPSE) to calculate the summary require one order of magnitude more execution time than the ones relying on query logs. iSummary is just a bit slower than Random showing that linking the $k$ nodes using queries has a minimal impact on query execution, but highly improves summaries' quality. Further, we can observe that as the $k$ grows, all algorithms require more time to identify and link more nodes. Overall, however, iSummary is only 0.13 times slower than Random, 14 times faster than GLIMPSE, and 40 times faster than PPR, however dominating all baselines in terms of coverage.

\section{Related Work} 

In this Section, we focus on personalized, structural, and non-quotient summaries and we present related works. For a complete overview of the works in the area, the interested reader is forwarded to relevant surveys available in the domain \cite{vcebiric2019summarizing}, \cite{DBLP:journals/vldb/Kellou-MenouerK22}.

Among the first works that focused on generating personalized non-quotients is \cite{zhang2007ontology}, that returns a summary of an RDF Sentence Graph. An RDF Sentence Graph is a weighted, directed graph where each vertex represents an RDF sentence. A link between two sentences exists if an object of one sentence belongs to another sentence as well. The creation of a sentence graph is customized by domain experts, who provide as input the desired summary size, and their navigation preferences, i.e. weights in the links they are most interested in. 
In Queiroz-Sousaet al. \cite{alzogbi2013similar}, on the other hand, the authors try to combine user preferences with the degree of centrality and the closeness to calculate the importance of a node, and then they use an algorithm to find paths that include the most important nodes in the final graph. However, in both these approaches, incorporating user preferences is neither explored in detail nor evaluated.

GLIMPSE \cite{DBLP:conf/icdm/SafaviBFMMK19} is the most relevant work to our approach and focuses on constructing personalized summaries of KG containing only the facts most relevant to individuals' interests. However, they require from the user to provide a set of relevant queries that would like relevant information to be included in the summary, whereas their algorithms are directly executed on the KG. However, it is difficult for a user to provide these queries and although the corresponding algorithm has linear complexity in the number of edges in the KG, still faces scalability problems. As shown is not able to run for big KGs.

Finally, there is latest approach named WBSUM~\cite{DBLP:conf/ssdbm/VassiliouTPK21} 
 which exploits query logs for constructing KG summaries. However, those are static, generic, and not personalized. 

Overall, our work is the first, structural, non-quotient, workload-based personalized summarization method. 
Our work accepts minimal user input, and exploits query workloads to generate high-quality summaries. Further our algorithm is linear in the number of queries available and as such efficient and scalable.

\section{Conclusions} 
In this paper, we present a summarization method able to construct personalized, workload-based, semantic summaries with high quality. We formulate the problem of $\lambda/\kappa$-Personalized summaries and provide an elegant algorithm for resolving it, linear in the number of queries available in the query logs with theoretical guarantees. Our algorithm effectively identifies different weight assignments for different inputs and is able to efficiently and effectively identify how to link the selected nodes based on the available queries. 

We experimentally show that even 5k queries are enough for generating high-quality summaries (10\% of the query log in DBpedia) and we compare our approach with several baselines. We demonstrate that our approach strictly dominates all baselines in terms of query coverage (20-50\% better coverage than Random and 33-56\% better coverage than PPR and 40\% better coverage than GLIMPSE) and it is highly efficient being orders of magnitude faster than relevant approaches working directly on the KG.

\mpara{Future Work.} 
As future work, we intend to explore alternative methods for linking the $\kappa$ nodes to be used in the summary by exploiting the original data graph. The graph could be queried just once at the end for replacing the variables with the actual resources from the KG. This would introduce minimal overhead for querying the original graph and it would be quicker and possibly more effective than searching again all queries for filling the missing variables.

Another really interesting direction is to study how personalized summaries change over time for specific user input. As users' interests drift in time and we only require 4k-10k queries for generating high-quality summaries, it would be interesting to identify how the personalized summaries also change, considering that queries focus changes through time due to specific events, disasters, seasonality, or occasions.

Finally, as $\lambda/\kappa$-Personalized summaries are not unique, introducing the element of diversity would be interesting so that the users are not always presented with the same personalized summary.

\section*{Acknowledgments}
This research project was supported by the Hellenic Foundation
for Research and Innovation (H.F.R.I.) under the “2nd Call for H.F.R.I. Research Projects to support Post-Doctoral Researchers” (iQARuS Project No 1147).

%
%
%
\bibliographystyle{splncs04}
\bibliography{main}

\end{document}